\newtheorem{example}{Example}
\newtheorem{remark}{Remark}
\newtheorem{theorem}{Theorem}[section]
\newtheorem{proof}{Proof}[section]
\newtheorem{definition}{Definition}[section]
\definecolor{mbc}{RGB}{255, 40, 40}
\title{Quantum Markov Chain Semantics for Quip-E Programs}
\author[1,2]{Linda Anticoli}
\author[3]{Leonardo Taglialegne}
\affil[1]{University College London, WC1E 6BT London, United Kingdom.}
\affil[2]{Cambridge Quantum Computing Limited, CB2 1UB Cambridge, United Kingdom.}
\affil[3]{Dept. of Mathematics, Computer Science and Physics, University of Udine, Udine, Italy.}
\date{}                     
\date{}
\newtheorem{lemma}{Lemma}[section]
\begin{document}

\maketitle

\begin{abstract}
In this work we present a mapping from a fragment of the quantum programming language Quipper, called Quip-E,  to the
semantics of the QPMC model checker, aiming at the automatic verification of quantum programs. As a main outcome, we define a structural operational semantics for the Quip-E language corresponding to quantum Markov chains, and we use it as a basis for analysing quantum programs through the QPMC model checker. The properties of the semantics are proved and contextualised in the development of a tool translating from quantum programs to quantum Markov chains.
\end{abstract}

\section{Introduction}
\label{sec:introduction}

Quantum programming languages allow the specification of quantum programs in a human readable form, and their translation into machine executable code. In contrast with classical computation, where a huge variety of high-level programming languages are available, allowing the programmer to abstract from the physical details of both machine and the parading used, quantum programming languages were regarded as strictly related to the underlying physical hardware. Quantum programs are given in terms of quantum circuits which have a simple mathematical description but could be very difficult to realise in practice without the knowledge of the underlying physical model.  For this reason, a wide variety of different higher level programming languages have been provided (e.g., as in \cite{Omer2005, Zuliani2000QP, Gay2005, liquid1} among others). 
However, the use of higher level languages provides a good level of abstraction from the physical model, but does not guarantee the correctness of the quantum program.

Quantum model checking refers to a technique used to verify formal properties of quantum algorithms and protocols. Even in an idealised, noise free regime, quantum computation is based on the counter-intuitive laws of quantum physics, and it relies on a very fragile equilibrium in order to maintain a hopefully-fault-tolerant computation. Hence, the ability of validating and verifying quantum protocols, by assessing their correctness in order to avoid unexpected behaviours, is crucial. In the past, different authors explored theoretical proposals for quantum model checkers (e.g., as in \cite{BaltazarChadha2007, GayPapanikolaou2008} among others), using different structures as models and different temporal logics to express the properties. 

In this work we focus on quantum Markov chains (QMCs) as models: our aim is to provide a structural operational semantics in terms of QMCs for quantum programs written in a high level quantum programming language. In this way we want to formalise the work presented in \cite{entangle-valuetools}, in which a tool for the verification of quantum programs formal properties has been presented. 

In developing our framework, we used the functional language Quipper \cite{Selinger2014} and the PRISM-inspired model checking system QPMC \cite{QPMC}. 
Quipper is a functional quantum programming language based on Haskell that allows to build quantum programs and provides the possibility to simulate them. QPMC is a model checker for quantum protocols that uses the quantum temporal logic QCTL, an extension of PCTL \cite{kwiatowska1998}, in order to verify properties of quantum protocols.  Unfortunately, Quipper lacks a built-in formal verification tool. On the other hand, QPMC supports formal verification but it is based on a low-level specification language. 
In order to overcome the limitations of both frameworks, we built \texttt{Entang$\lambda$e}, a translator from quantum programs written in a Quipper sublanguage that we isolated (i.e., Quip-E ) into QPMC structures (i.e., quantum Markov chains). 

In this work we define a structural operational semantics for Quip-E programs in terms of QMCs, with the goal of formalising the translation and providing a way to adapt, or extend it outside QPMC, as a future goal.

The paper is organised as follows. In Section \ref{sec:preliminaries} we recall some basic quantum notations and the main mathematical formalisms used. Then we briefly introduce Quip-E and QPMC. In Section \ref{sec:TRCtranslation} we define the structural operational semantics allowing to translate Quip-E programs into QMCs. In Section \ref{sec:implementation} we introduce the translator \texttt{Entang$\lambda$e}, then we discuss some case studies used to verify our implementation by using simple examples of quantum algorithms. Section \ref{sec:conclusion} ends the paper summarising our contribution and outlining possible future lines of research.


\section{Preliminaries}\label{sec:preliminaries}

\subsection{Mathematical Quantum Models}\label{sec:quantum}

Quantum systems are represented by complex Hilbert spaces. A complex Hilbert space $\mathcal H$ is a complete vector space equipped with an inner product inducing a metric space. In particular, we will consider quantum systems described by finite dimensional Hilbert spaces of the form $\mathbb C^{2^k}$.
The elements of $\mathcal{H}$ (vectors) are denoted by either $\psi$ or $|\psi\rangle$ (i.e., ket notation). The notation $\langle \psi |$  (i.e., bra notation) denotes the transposed conjugate of $|\psi \rangle$.
The scalar product of two vectors $\varphi$ and $\psi$ in $\mathcal{H}$ is denoted by $\langle\varphi|\psi\rangle$, whereas $|\varphi\rangle \langle \psi |$ denotes the linear operator 
defined by $|\varphi\rangle$ and $\langle \psi|$.  We use $I$ to denote the identity matrix and $tr(\cdot)$ for the matrix trace. 

There are two possible formalisms based on Hilbert spaces for quantum systems: the \emph{state vector} formalism and the \emph{density matrix} one. We briefly summarise both of them, since Quipper is based on state vectors, while QPMC uses density matrices.

\subsubsection{State Vectors.}

\paragraph{State.} The \emph{state} of a quantum system is described by a \emph{normalised vector}  $|\psi\rangle \in \mathcal{H}$, i.e., $\lVert|\psi\rangle\rVert = \sqrt{\langle\psi|\psi\rangle}= 1$. 
The normalisation condition is  related to the probabilistic interpretation of quantum mechanics. 
\paragraph{Evolution.} The temporal \emph{evolution} of a quantum system is described by a \emph{unitary operator} (see, e.g., \cite{Nielsen-Chuang}). 

A linear operator $U$ is unitary if and only if  its conjugate transpose $U^{\dag}=(U^{T})^{*}$  coincides with its inverse $U^{-1}$.
Unitary operators preserve inner products and, as a consequence, norms of vectors. 
In absence of any measurement process, the state $|\psi_0\rangle$ at time $t_0$ evolves at time $t_1$ through the unitary operator $U$ to the state
$$|\psi_1\rangle = U \ |\psi_0\rangle$$

\paragraph{Measurement.} An \emph{observable} is a property of a physical system that can be measured, i.e., a physical quantity such as position, momentum, energy and spin. Observables are described by \emph{Hermitian 
operators} (see, e.g., \cite{Preskill}). 
A linear operator $A$ is Hermitian if $A = A^{\dag}$.  
Assuming non degeneracy, an 
Hermitian operator $A$ can be decomposed as 
$$A=\sum_{i=1}^n a_i | \varphi_i\rangle \langle \varphi_i|$$ where the $a_i$'s ($|\varphi_i\rangle$'s) are the eigenvalues (eigenvectors, respectively) of $A$.

Given a system in a state $|\psi\rangle$,
the outcome of a measurement of the observable $A$ is one of its eigenvalues $a_i$ and 
the state vector of the system after the measurement, provided the outcome $a_i$ has been obtained, is
\begin{equation*}\frac{(|\varphi_i\rangle\langle \varphi_i|) |\psi\rangle}{||(|\varphi_i\rangle\langle \varphi_i|) |\psi\rangle ||}\end{equation*}
with probability
\begin{equation*}p(a_i)  = || (|\varphi_i\rangle\langle \varphi_i|) |\psi\rangle ||^2 = \langle\psi|(|\varphi_i\rangle\langle \varphi_i|) |\psi\rangle\end{equation*}

\subsubsection{Density Matrices.}\label{sec:density}
\emph{Density matrices} take the role of state vectors. Quantum states described by state vectors  are idealised descriptions that cannot characterise statistical (incoherent) mixtures which often occur. These states are called \emph{mixed states}, and can be described by using density matrices.

 \paragraph{State. }The \emph{state} of a quantum system is described by an Hermitian, positive matrix $\rho$ with $tr(\rho)=1$.
Such matrices are called \emph{density} matrices.
A matrix $\rho$ is positive if for each vector $|\phi\rangle$ it holds that $\langle\phi|\rho|\phi\rangle \geq 0$.

Given a normalised vector $|\psi\rangle$ representing the state of a system through the state vector formalism, the corresponding density matrix is $|\psi\rangle \langle \psi|$.

\paragraph{Evolution and Measurement. }\emph{Evolutions} and \emph{measurements} of quantum systems are now described by \emph{superoperators} \cite{Nielsen-Chuang}. 
A superoperator is a (linear) function $\mathcal{E} : \rho_0 \to \rho_1$ which maps a
density matrix $\rho_0$ at time $t_0$ to a density matrix $\rho_1$ at time $t_1>t_0$
that satisfies the following properties:
$\mathcal{E}$ preserves hermiticity; 
$\mathcal{E}$ is trace preserving;
$\mathcal{E}$ is completely positive.

Let $\mathcal{B}_n (\mathcal{H})$ be the space of $n\times n$ density operators over $\mathcal{H}$. A linear map $\mathcal{E}: \mathcal{B}_n \mapsto \mathcal{B}_n $ is \emph{positive} if it maps positive operators into positive operators, while it is \emph{completely positive} if and only if $\mathcal{E} \otimes \mathbb{I}_m: \mathcal{B}_n \otimes \mathcal{B}_m \mapsto \mathcal{B}_n \otimes \mathcal{B}_m$ (where $\mathbb{I}_m$ is a $m$-dimensional identity operator) is positive for all $m \geq 0$.
Complete positivity is a requirement which allows a linear map to be positivity preserving even if the system under consideration (represented by an $n$-dimensional Hilbert space) has previously been correlated with another, unknown, system (represented by an $m$-dimensional Hilbert space). Positivity alone does not guarantee a positive evolution of the density matrix. 

Given a unitary operator $U$ the corresponding superoperator $\mathcal{E}_{U}$ can be defined as follows:
$$\mathcal{E}_{U}(\rho) = U\rho U^{\dag}$$

A quantum \emph{measurement} is described by a collection $\{M_i\}$ of linear operators, called measurement operators, satisfying the following condition:
\begin{equation*}\sum_i{M_i^\dag M_i} = I\end{equation*} 
The index \emph{i} refers to the measurement outcomes that may occur in the experiment. If $\rho$ is the state before the measurement and the outcome of the measurement is the \emph{i}-th one, then the state after the measurement is: 
\begin{equation*}\frac{M_i \rho M_i^\dag}{tr(M_i \rho M_i^\dag)}\end{equation*}
with probability
\begin{equation*}p(i) = tr(M_i \rho M_i^\dag)\end{equation*}
For example, given a measurement process of an observable $A=\sum_{i=1}^n a_i | \varphi_i\rangle \langle \varphi_i|$, the measurement operators are represented by $M_i=|\varphi_i\rangle \langle \varphi_i|$ where the index $i$ refers to the outcome $a_i$.

\subsection{Quip-E: a Quipper recursive fragment}\label{sec:quipper}

Quipper is an embedded functional programming language for quantum
computation \cite{Selinger2013} based on Knill's QRAM model \cite{knill-qram}. 

Quipper is endowed with a collection of data types,
combinators, and a library of functions within Haskell, together with
an idiom, i.e., a preferred style of writing embedded programs \cite{Selinger2013}.
It provides an extended quantum-classical circuit model which allows the use of quantum and 
classical wires (quantum and classical bits respectively) and quantum and classical gates (unitary and classical logic gates respectively) within
a circuit. 

Since Quipper is above all a circuit description language, 
it uses the state vector formalism and its main purpose is
to make circuit implementation easier providing high level operations for circuit manipulation.
A Quipper program is a function that inputs some quantum and classical data, performs
state changes on it, and then outputs the changed quantum/classical data. 
The quantum core of the computation
is encapsulated in a Haskell monad called \texttt{Circ}, which from an abstract point of view returns a quantum circuit.

A set of predefined gates (e.g., \texttt{hadamard}, \texttt{cnot}, \dots), together with the possibility of specifying \emph{ancilla} qubits and \emph{controls}, are provided. For a more in-depth description of Quipper we address the literature in \cite{quipper-intro}.

In our work  we are interested in classical operations and controls inside the
\texttt{Circ} monad allowing to define recursive functions too.  In  
Example \ref{ex2} we show part of the code of a Quipper recursive version of the quantum Fourier transform in the \texttt{Circ} monad (see \cite{quipper-intro}).
\begin{example}\label{ex2}
The function \texttt{qft'} computes the quantum Fourier transform of a list of qubits. If the list is empty, the circuit itself is empty. If the input is a list of one qubit, then the Hadamard gate is applied. The circuit for a list of $n + 1$ qubits applies the circuit for $n$ qubits to the last $n$ elements of the list, followed by a set of rotations over all $n + 1$ qubits.
\lstinputlisting[language=Haskell, breaklines=true, firstline=29, basicstyle=\scriptsize\ttfamily\linespread{0.5}, lastline=39]{deutsch_paper.hs}
\end{example}

Quipper allows the use of Boolean operators and \texttt{if-then-else} statements with tests performed on Boolean parameters. 
The \texttt{dynamic\_lift} operator converts a bit to a Boolean parameter. Hence, the result of a measurement over a qubit can 
be stored in a bit, and then converted to a Boolean and used as guard in a test.
Moreover, boolean parameters can be used to initialize qubits through the \texttt{qinit} operator.
In Example \ref{ex3} we show how these can be combined inside a simple recursive Quipper circuit.

\begin{example}\label{ex3}
In the following example we show an instance of the quantum coin flipping: a qubit is initialized to $|0\rangle$, then the Hadamard gate is applied to it and it is measured. If the outcome is $0$, i.e. the value associated to the state $|0\rangle$, then the circuit is re--initialized, otherwise it terminates.
This is repeated until the result of the measurement is $1$. 
Hence, the circuit halts after an unpredictable number of iterations, i.e. ``coin flips'', returning the qubit $|1\rangle$. 
\lstinputlisting[language=Haskell, breaklines=true, firstline=320, basicstyle=\scriptsize\ttfamily\linespread{0.5}, lastline=330]{deutsch_paper.hs}
\end{example}
In the above example the circuit is tail recursive. In classical computation tail recursion corresponds to \texttt{while-loops} which, together with concatenation
of instructions, assignments, increments, and comparisons, give rise to a Turing-complete formalism.
In the case of quantum circuits tail recursion is probably the most natural form of recursion. A sequence of unitary gates is applied, the 
result is measured over some qubits and the result  is evaluated to decide whether repeat or stop the circuit.

The fragment of Quipper we are interested in is the \texttt{Circ} monad in which we allow tail recursion. 
In particular, we allow the use of the initialisation operator \texttt{reset}, of unitary operators, and of measurements, and we call this sublanguage Quip-E.
The results of measurements can be lifted to Boolean values and used inside a guard condition to decide whether to terminate the circuit or 
to restart it.

The \emph{body} \texttt{Body\_C} of a Quip-E \emph{tail-recursive program} $\texttt{trc\_C}$ is defined by the following grammar:

\begin{equation*}
\begin{split}
\texttt{Body\_C  ::= } &  \texttt{reset\_at q | U\_at [q$_{i_1}$,\dots,q$_{i_j}$] | m <- measure q | }\\
& \texttt{bool <- dynamic\_lift m | if (bool) Body\_C$_1$ else Body\_C$_2$ | }\\
& \texttt{Body\_C$_1$ Body\_C$_2$}
\end{split}
\end{equation*}
where \texttt{q, q$_{i_1}$,\dots,q$_{i_j}$} are qubits that occur as formal parameters of the program, \texttt{U\_at} is a unitary operator of dimension $j$, \texttt{m} is a bit variable name, \texttt{b} is a Boolean parameter. 

\begin{remark}
By nesting \texttt{if-then-else} constructors it is possible to mimic conditions that depend on every possible Boolean combination of sets of Boolean parameters. 
Hence, in the formal definition of the language we omit Boolean combinations without loosing expressive power, while our translator of Quip-E allows their use to ease the programming task. 
\end{remark}

A \emph{tail-recursive circuit} $\texttt{trc\_C}$ has the form  
\lstinputlisting[language=Haskell, breaklines=true, firstline=67, basicstyle=\scriptsize\ttfamily\linespread{0.5}, lastline=72]{deutsch_paper.hs} 
where \texttt{q$_1$, q$_2$, \dots} are the qubits occurring in \texttt{Body\_C} and \texttt{bool} is a boolean parameter occurring in \texttt{Body\_C}. In this case we say that \texttt{trc\_C} is the Quip-E program defined by the body \texttt{Body\_C} and the exit condition \texttt{exitOn bool}.
Intuitively, the execution of \texttt{Body\_C} is repeated until \texttt{bool} becomes true. 

We impose that whenever a Boolean parameter \texttt{bool} is used as a guard of \texttt{if-then-else} and \texttt{exitOn} constructors, its value must have been previously defined in the body (e.g., through a \texttt{measure} instruction followed by a \texttt{dynamic\_lift}).

\begin{remark}
Non recursive programs can be defined as recursive ones using exit conditions that are always true. Hence, we omit them in the formal definition of Quip-E even if our translator allows their explicit use.
\end{remark}

\begin{example}\label{ex:toy}
The following is a small example of a Quip-E program. In the program two qubits are initialized to $|0\rangle$ and $|1\rangle$, respectively, then
Hadamard is applied to the second one, after the second qubit is measured and the result of the measurement is used both to decide which gate has to be applied to the first qubit and whether the program has to loop or terminate. 

\lstinputlisting[language=Haskell, breaklines=true, firstline=74, basicstyle=\scriptsize\ttfamily\linespread{0.5}, lastline=86]{deutschpaper.hs} 

The same program can be written in Quipper native formalism as follows:
\lstinputlisting[language=Haskell, breaklines=true, firstline=88, basicstyle=\scriptsize\ttfamily\linespread{0.5}, lastline=98]{deutschpaper.hs} 

The \texttt{reset} function is a way to provide a unitary operator for the  \texttt{qinit} instruction. In particular, 
instruction
 \texttt{reset\_at q} in Quip-E is equivalent to the Quipper instruction \texttt{q <- qinit False}, which initializes the qubit to $|0\rangle$. 
 If \texttt{reset\_at q} is followed by the application of a \texttt{not} gate on \texttt{q} (e.g., \texttt{gate\_X\_at q}), then the sequence of two instructions of \emph{Quip\_E} is equivalent in Quipper to \texttt{q <- qinit True}, which initializes the qubit to $|1\rangle$. 
\end{example}

\subsection{QPMC: Quantum Program/Protocol Model Checker}

QPMC is a model checker for quantum programs and protocols based on the density matrix formalism
available in both web-based and off-line versions.\footnote{http://iscasmc.ios.ac.cn/too/qmc}
It takes in input programs written in
an extension of the PRISM guarded command language \cite{PRISM} that allows, in addition to the constants definable
in PRISM, the specification of the \texttt{vector}, \texttt{matrix}, and \texttt{superoperator} types.
QPMC supports the bra-ket notation and inner, outer and tensor product can be written using it.

The semantics of a QPMC program is given in terms of 
a \emph{superoperator weighted Markov chain}  -- a Markov chain in which the
state space is classical, while all quantum effects are encoded in the superoperators labelling the transitions 
(see, e.g., \cite{QPMC,MCQMC2013}). 
Differently from what we defined in Section \ref{sec:density}, QPMC superoperators are not necessarily trace-preserving, they are just completely positive linear operators.
A trace-non-increasing superoperator describes processes in which extra-information is obtained by \emph{measurement}. We briefly provide some definition useful to understand what follows (as given in \cite{MCQMC2013}).

\noindent Let $\mathcal{S(H)}$ be the set of superoperators over a Hilbert
space $\mathcal{H}$ and $\mathcal{S^I}(\mathcal{H})$ be the subset of
trace-nonincreasing superoperators.

\noindent Given  a density matrix $\rho$ representing the state of a system, 
$\mathcal{E} \in \mathcal{S^I(H)}$ implies that 
$tr(\mathcal{E}(\rho)) \in [0,1]$. Hence, it is natural to regard the set $\mathcal{S^I}(\mathcal{H})$ as the
quantum correspondent of the domain of traditional probabilities \cite{QPMC}.

\noindent Given two superoperators $\mathcal{E}, \ \mathcal{F} \in \mathcal{S(H)}$, $\mathcal{E} \mathbf{\lesssim} \mathcal{F}$ if for any quantum state $\rho$ it holds that $\ tr(\mathcal{E}(\rho)) \leq tr(\mathcal{F}(\rho))$. 

Informally, we can define a superoperator weighted \emph{Quantum Markov Chain} (herein QMC) as a discrete time Markov chain, where classical probabilities are replaced with quantum probabilities. Definition \ref{qmcdef} provides a more formal statement.
\begin{definition}[Quantum Markov Chain \cite{MCQMC2013,QPMC}]\label{qmcdef} A QMC over a Hilbert space $\mathcal{H}$ is a tuple \\ $\mathit{(S, Q, AP,
L)}$, where:
\begin{itemize}
\item $S$ is a finite set of classical states; 
\item $Q: S \times S \rightarrow \mathcal{S^I(H)}$
is the transition matrix where for each $s \in S$, the superoperator $\sum_{t\in S} {Q(s,t)}$ is trace-preserving
\item $AP $ is a finite set of atomic propositions
\item $L : S \to 2^{AP}$ is a labelling function
\end{itemize}
\end{definition}

The aim of QPMC is to provide a formal framework where to define and analyze properties of quantum protocols. 
The properties to be verified over QMC are expressed using the quantum computation tree logic
(QCTL), a temporal logic for reasoning about the evolution of quantum
systems introduced in \cite{MCQMC2013} that is a natural extension
of PCTL
.
\begin{definition}[Quantum Computation Tree Logic \cite{MCQMC2013,QPMC}]
A QCTL formula is a formula over the following grammar:
\begin{center}$SF ::= a \ | \  \lnot \ SF \ | \ SF \wedge SF \ | \ \mathbb{Q_{\sim \epsilon}}[PF] $\end{center}\begin{center}$PF ::= X \ SF \ | \ SF \ U^{\leq \mathit{k}} \ SF \ | \ SF \  U \ SF $\end{center}
where $a \in AP$, $\sim \ \in \{ \lesssim, \gtrsim, \eqsim \}$, 
$\mathcal{E} \in \mathcal{S^I}(\mathcal{H})$, $k \in \mathbb{N}$. 
$SF$ is a \emph{state }formula, while $PF$ is a \emph{path } formula. 
\end{definition}
The quantum operator formula $\mathbb{Q_{\sim \epsilon}}[PF]$ is
a more general case of the PCTL probabilistic operator $\mathbb{P_{\sim p}}[PF]$
and it expresses a constraint on the probability that the paths from
a certain state satisfy the formula $PF$. 
Besides the logical operators presented in QCTL, QPMC supports an
extended operator $Q=? [PF]$ to calculate (the matrix representation of) the superoperator
satisfying $PF$. Moreover,
QPMC provides the functions $qeval((Q=?)[PF], \rho)$ to compute the density operator obtained from applying
the resulting superoperator on a given density operator $\rho$,
and $qprob((Q=?)[PF], \rho) = tr(qeval((Q=?)[PF], \rho)))$ to calculate the probability of satisfying $PF$,
starting from the quantum state $\rho$ \cite{QPMC}.
\begin{example}\label{ex:toy-qpmc}
The following is a small example of a QPMC program.

\lstinputlisting[language=Haskell, breaklines=true, firstline=109, basicstyle=\scriptsize\ttfamily\linespread{0.5}, lastline=119]{deutsch_paper.hs} 
It is equivalent to the following Quipper circuit:
\lstinputlisting[language=Haskell, breaklines=true, firstline=123, basicstyle=\scriptsize\ttfamily\linespread{0.5}, lastline=130]{deutsch_paper.hs} 
\end{example}

\section{Operational Semantics of Quip-E Programs}\label{sec:TRCtranslation}

In this section we provide an operational semantics for Quip-E programs; the semantics of such programs will be given in terms of QMCs. Intuitively,  a transition system through the operational semantics defines the operational rules  for all the programs in a given language. 
The nodes of such transition system represent the states during the computation and the transitions mimic the state changes. In the general case, the transition system associated with a program could have an infinite number of nodes. Even when it is finite, its size could depend on the input of the program, hence the transition system cannot be constructed on a generic input. 
In this Section we will see that the restrictions imposed on Quip-E ensure that we can associate a finite transition system to any Quip-E program. Such transition system turns out to be a QMC. It is important to note that in Quipper the use of lists of qubits together with 
recursion allows to represent an infinite family of circuits using a single program. The semantics we define in this section cannot be easily generalised to such Quipper programs.

Quip-E denotes a fragment of Quipper programs which generate only finite state, possibly circular, graphs of computations. Moreover, the dimension of such state spaces can be determined at compiling time. This is not the case if we consider generic Quipper programs having, for example, lists of qubits as formal parameters. In such cases, even if the state spaces are finite, their sizes depend on the length of the input qubits lists. 

Let us fix an a-priori finite set $\mathcal Q$ of qubits together with a finite set $\mathcal B$ of bits and Booleans. In this section we consider Quip-E programs whose variables and parameters are included in such sets. This assumption can be dropped, but this would make the description of the semantics more complex without increasing its expressibility.

Let \texttt{trc\_C} be a Quip-E program having body \texttt{Body\_C}. 
Let $\mathcal L$ be the set of functions from $\mathcal B$ to $\{0,1\}$. Intuitively, a function $L\in \mathcal L$ is an assignement of values for the bits and Booleans occurring in the program.
The rules in Table \ref{table} define by induction on the structural complexity of \texttt{Body\_C} its operational semantics in terms of QMCs. The states of such QMCs are pairs, whose first element is either the body of a program or the \emph{empty body}, denoted by \_\!\_\!\_. The second element of a pair is a function belonging to $\mathcal L$, which stores the current values of the bits and Booleans. All the operators that label the edges of the chain have dimension $2^{|\mathcal Q|}$.
Intuitively, if \texttt{Body\_C} is \texttt{reset\_at q$_k$}, then the qubit \texttt{q$_k$} is measured along the standard basis, applying the operators $\mathcal M_0^k$ and $\mathcal M_1^k$.
When $\mathcal M_0^k$ is applied the empty body is reached, when $\mathcal M_1^k$ is applied the body \texttt{X\_at q$_k$} is reached, and \texttt{X\_at} is the Pauli $X$ operator. In both cases there are no effects on the function $L$. In the case of \texttt{U\_at [q$_{i_1}$,\dots,q$_{i_j}$]} the superoperator 
$\mathcal U_{i_1,\dots,i_j}$
corresponding to \texttt{U} is applied and the empty body is reached, without affecting the function $L$. 
Such superoperator is computed by applying the identity operator to the qubits in $\mathcal Q\setminus \{ \texttt{q$_{i_1}$,\dots,q$_{i_j}$}\}$ and by swapping the qubits to preserve their order (see also Section \ref{sec:preliminaries}).
In the case of \texttt{m $\leftarrow$ measure q$_k$} the measure operators $\mathcal M_0^k$ and $\mathcal M_1^k$ are applied and the result of the measurement is stored by modifying $L(\texttt{m})$ accordingly. In particular, $L[L(\texttt{m})=i]$ denotes the function $L'$ which behaves as $L$ on $\mathcal B\setminus \{\texttt{m}\}$, while $L'(\texttt{m})$ has value $i$. In the case of \texttt{bool <- dynamic\_lift m} the identity superoperator $\mathcal I$ is applied, i.e., the qubits are unchanged, and the value stored in $L(\texttt{m})$ is copied in $L(\texttt{bool})$. 
In the case of an \texttt{if-then-else} instruction on the guard \texttt{bool} either the first or the second branch is chosen depending on the value of $L(\texttt{bool})$, without modifying the values of the qubits. In the case of a sequence \texttt{Body\_C$_1$ Body\_C$_2$} the first instruction of \texttt{Body\_C$_1$}, is executed applying the corresponding rule and the computation proceeds. Finally, the last rule is added only to ensure that also the empty body satisfies the second condition in the definition of QMC. 

\begin{table}[!h!t]
\begin{center}
\begin{tabular}{ccc}
\hline \\
\\
\inference{}{(\texttt{reset\_at q$_k$},L) \xrightarrow{\mathcal M_0^k} (\_\!\_\!\_,L)} & \qquad & \inference{}{(\texttt{reset\_at q},L) \xrightarrow{\mathcal M_1^k} ((\texttt{X\_at q$_k$},L)}\\
\\
\\
\multicolumn{3}{c}{\inference{}{(\texttt{U\_at [q$_{i_1}$,\dots,q$_{i_j}$]},L)\xrightarrow{\mathcal U_{i_1,\dots,i_j}} (\_\!\_\!\_,L)}}\\
\\
\\
\multicolumn{3}{c}{\inference{}{(\texttt{m $\leftarrow$ measure q$_k$},L) \xrightarrow{\mathcal M_i^k} (\_\!\_\!\_,L[L(\texttt{m})=i]\})}[\phantom{aaa}for $i\in \{0,1\}$]} 
\\
\\
\\
\multicolumn{3}{c}{\inference{}{(\texttt{bool <- dynamic\_lift m},L)\xrightarrow{\mathcal I}(\_\!\_\!\_,L[L(\texttt{bool})=L(\texttt{m})])}}\\
\\
\\
\multicolumn{3}{c}{\inference{L(\texttt{bool})=i}{(\texttt{if (bool) Body\_C$_1$ else Body\_C$_0$},L)\xrightarrow{\mathcal I}(\texttt{Body\_C$_i$},L)}[\phantom{aaa}for $i\in \{0,1\}$]}\\
\\
\\
\multicolumn{3}{c}{\inference{(\texttt{Body\_C$_1$},L) \xrightarrow{\mathcal S}(\texttt{Body\_C$_1$'},L')}{(\texttt{Body\_C$_1$ Body\_C$_2$},L) \xrightarrow{\mathcal S} 
(\texttt{Body\_C$_1$' Body\_C$_2$},L')}}\\
\\
\\
\multicolumn{3}{c}{\inference{(\texttt{Body\_C$_1$},L) \xrightarrow{\mathcal S}(\_\!\_\!\_,L')}{(\texttt{Body\_C$_1$ Body\_C$_2$},L) \xrightarrow{\mathcal S} 
(\texttt{Body\_C$_2$},L')}}
\\
\\
\\
\multicolumn{3}{c}{\inference{}{(\_\!\_\!\_,L)\xrightarrow{\mathcal I}(\_\!\_\!\_,L)}}
\\
\\
\\
\hline
\end{tabular}
\caption{Operational Semantics of Quip-E}\label{table}
\end{center}
\end{table} 

\begin{remark}\label{remark}
Notice that the values of the qubits are not stored in the state of the Markov chain. Their final values can be computed considering the composition of the operators which label the edges of the chain and by applying the resulting superoperator to their initial values (see \cite{QPMC}). In fact, all the operators that label the edges of the chain have dimension $2^{|\mathcal Q|}$ and it is fundamental that the order of the qubits is the same along all the chain. As a matter of fact, Quipper and Quip-E aim to provide a flexible programming framework and allow to specify at each step which are 
the qubits of interest and the order in which they enter a quantum gate. On the other hand, QMCs are a low level description language for quantum processes and as such they prefer minimality rather than flexibility. Hence, in a QMC all the gates are applied to all the qubits and these are always considered in the same order. This does not restrict the expressibility of QMC, since by exploiting swapping and identitiy operators it is always possible to extend a gate to all the qubits in the desired order.  
\end{remark}

Given its semantics, before we define the QMC associated to the body of a Quip-E program we need to provide further definitions.

\begin{definition}[Quantum Chains - $QC(s)$]
Let $s=(\emph{\texttt{Body\_C}},L)$ we define the structure $$QC(s)=(S(s), Q(s), AP(s), Lab(s))$$ as follows:
\begin{itemize}
\item $S(s)$ is the set of pairs reachable from $s$ by applying the rules of Table \ref{table};
\item $Q(s):S(s)\times S(s) \rightarrow \mathcal S^\mathcal I(\mathbb C^{2^{|\mathcal Q|}})$ is a transition operator defined by the rules of Table \ref{table}; 
\item $AP(s)=\mathcal B$; 
\item $Lab(s)((B',L'))=\{b\in \mathcal B \:|\: L'(b)=1\}$ are the labels allowing to keep track of the measurement results.
\end{itemize}
$QC(s)$ is said to be the \emph{quantum chain} of $s$.
\end{definition}

\begin{lemma}\label{pqc}
Given a state $s=(\emph{\texttt{Body\_C}},L)$ the quantum chain $QC(s)$ is a QMC.
\end{lemma}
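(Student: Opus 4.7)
The plan is to verify, one by one, the four requirements of Definition~\ref{qmcdef} applied to $QC(s)$. Two of them are immediate: $AP(s)=\mathcal B$ is finite by our standing assumption on $\mathcal B$, and $Lab(s)$ is by construction a function from $S(s)$ to $2^{\mathcal B}$. The real content is (i) finiteness of $S(s)$, (ii) $Q(s)(s',s'')\in\mathcal S^{\mathcal I}(\mathbb C^{2^{|\mathcal Q|}})$ for every pair of states, and (iii) the trace-preservation of $\sum_{t\in S(s)}Q(s)(s',t)$ at each $s'$.

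For (i) I would proceed by structural induction on \texttt{Body\_C}. Reading the rules of Table~\ref{table}, the atomic bodies \texttt{reset\_at}, \texttt{U\_at}, \texttt{measure}, \texttt{dynamic\_lift} rewrite either to \_\_\_\ or, in the one exceptional case of \texttt{reset\_at q$_k$}, to \texttt{X\_at q$_k$}, which itself rewrites to \_\_\_; \texttt{if-then-else} rewrites to one of its two syntactic sub-bodies; and the successors of a sequence \texttt{Body\_C$_1$ Body\_C$_2$} are either \texttt{Body\_C$_1'$ Body\_C$_2$} for a successor \texttt{Body\_C$_1'$} of \texttt{Body\_C$_1$} (finitely many by the inductive hypothesis) or the fixed body \texttt{Body\_C$_2$}. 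Combined with the fact that the classical register $L$ lives in the finite set $\mathcal L$ of size $2^{|\mathcal B|}$, this shows $S(s)$ is finite.

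For (ii) and (iii) I would argue by case analysis on the structural rule that applies at a given state. Each $\mathcal M_i^k$ is the CP map $\rho\mapsto(I\otimes|i\rangle\langle i|\otimes I)\,\rho\,(I\otimes|i\rangle\langle i|\otimes I)$ on the appropriate tensor factor, hence trace-nonincreasing; each $\mathcal U_{i_1,\dots,i_j}$, built as in Remark~\ref{remark} by padding with identities and swaps, is unitary conjugation and therefore trace-preserving; the identity superoperator $\mathcal I$ is trace-preserving as well. So every edge label lies in $\mathcal S^{\mathcal I}$. For the sum condition, the \texttt{reset\_at} and \texttt{measure} states have exactly two outgoing edges labeled $\mathcal M_0^k$ and $\mathcal M_1^k$, and the completeness relation $(M_0^k)^\dagger M_0^k+(M_1^k)^\dagger M_1^k=I$ is exactly what makes $\mathcal M_0^k+\mathcal M_1^k$ trace-preserving; the \texttt{U\_at}, \texttt{dynamic\_lift}, \texttt{if-then-else}, and \_\_\_\ states have a single outgoing edge labeled by a trace-preserving superoperator.

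The one place requiring a little care, and which I expect to be the main bookkeeping obstacle, is the sequence construct. I would prove separately that the only rules firing at a state $(\texttt{Body\_C$_1$ Body\_C$_2$},L)$ are the two sequencing rules, and that they put outgoing edges of $(\texttt{Body\_C$_1$ Body\_C$_2$},L)$ in a label-preserving bijection with outgoing edges of $(\texttt{Body\_C$_1$},L)$. Given this, the sum of labels at the sequence state equals the sum of labels at $(\texttt{Body\_C$_1$},L)$, which is trace-preserving by the inductive hypothesis on \texttt{Body\_C$_1$}. This closes the induction and yields all three non-trivial clauses, completing the proof that $QC(s)$ is a QMC.
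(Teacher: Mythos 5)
Your proof is correct and follows essentially the same route as the paper's: a case analysis on the rule firing at each state, checking that the outgoing labels sum to a trace-preserving superoperator, with the two measurement branches handled via the completeness relation $\sum_i (M_i^k)^\dagger M_i^k = I$ and the sequencing case discharged by induction. You are in fact somewhat more thorough than the paper, which only verifies the trace-preservation clause of Definition~\ref{qmcdef} and treats the sequence case rather loosely (it appeals to induction on \texttt{Body\_C$_2$} and asserts the label is ``among the kinds already presented''), whereas your label-preserving bijection between the outgoing edges of $(\texttt{Body\_C$_1$ Body\_C$_2$},L)$ and those of $(\texttt{Body\_C$_1$},L)$, together with the explicit checks of finiteness of $S(s)$ and of membership of each label in $\mathcal{S^I}$, make the argument tighter without changing its substance.
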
 

\begin{proof} 
In order to prove that the quantum chain $QC(s)$ is a QMC we have to verify that the sum of the superoperators labelling the edges outgoing from each state is a trace preserving superoperator. We proceed by cases as follows: 

\begin{enumerate}[(1)]
\item $s=$ (\texttt{U\_at [q$_{i1},\dots,$q$_{ij}$]}, $L ) \Rightarrow $\texttt{(s, L)}$ \xrightarrow{\mathcal{U}_{i_1,\dots,i_j}}$ \texttt{(\_\!\_\!\_,L)} is the only outgoing edge from $s$ and, since $\mathcal{U}_{i1,\dots,ij}$ is the superoperator associated to the unitary operator $U$, it is trivially trace preserving.

\item $s=$ (\texttt{m$\leftarrow$measure  [q$_{k}$]}, $L$) $ \Rightarrow Q(s)= \mathcal{M}_{i}^k, \ \ i\in\{0,1\}$. In this case there are two outgoing edges from $s$, labelled $\mathcal{M}_{0}^k$ and $\mathcal{M}_{1}^k$, i.e., the superoperators associated to the projection operators $M_0^k$ and $M_1^k$ respectively. The  property $\sum_i M_i^k = \mathbb{I}$ which follows from the definition of projection operator, can be lifted to the case of superoperators, hence $\sum_i \mathcal{M}_i^k=\mathcal{I}$ which verifies the requirement of the sum being trace preserving.

\item $s=$ (\texttt{reset\_at q$_{k}$}, $L$). This is as in case ($2$).

\item $s=$ (\texttt{bool$\leftarrow$dynamic\_lift  [m]}, $L$).  In this case there is only one outgoing edge with label $\mathcal{I}$.

\item $s=$ (\texttt{if (bool) Body\_C$_1$ else Body\_C$_2$}, $L$). As in case ($4$),  since $L$ satisfies either \texttt{L(bool)}$=true$ or \texttt{L(bool)}$=false$ but not both.

\item $s=$ (\texttt{Body\_C$_1$  Body\_C$_2$}, $L$). This follows by induction on \texttt{Body\_C}$_2$.

Since \texttt{Body\_C$_1$} and \texttt{Body\_C$_2$} are compositions of states as in ($1$)--($5$), the superoperator $\mathcal{S}$ is, with certainty, among the kinds already presented, hence it is trace preserving.

\item $s=$ (\texttt{\_\!\_\!\_}, $L$) $\Rightarrow Q(s)=\mathcal{I}$. As in case ($4$).
\end{enumerate}

\end{proof}

\begin{definition}[QMC associated to a body]
Let \emph{\texttt{Body\_C}} be a Quip-E body. The \emph{QMC associated to \texttt{Body\_C}}, denoted by $QC(\emph{\texttt{Body\_C}})$, is $$QC((\emph{\texttt{Body\_C}},O))$$ where $O$ is the function that assigns value $0$ to all the variables in $\mathcal B$.
\end{definition}

Notice that $QC(s)$ a part from the self-loops on the ``empty body states'', $QC(s)$ is acyclic. 
In order to define the semantics of Quip-E programs it is convenient to define an acyclic version of $QC(s)$ in which self-loops are removed.
\begin{definition}[Quasi QMC associated to a body]
Let \emph{\texttt{Body\_C}} be a Quip-E body. The \emph{Quasi QMC associated to \texttt{Body\_C}}, denoted by $QC^-(\emph{\texttt{Body\_C}})$, is the structure obtained by removing the self-loops in $QC(\emph{\texttt{Body\_C}})$. 
\end{definition}
The structure $QC^-(\emph{\texttt{Body\_C}})$ is not a QMC, since for the terminal states, i.e., the pairs whose first element is the empty body, the second condition of the definition of QMC is not satisfied. In the following definition we associate a QMC to a \emph{Quip\_E} program by introducing two rules that fix the violation. 

\begin{definition}[QMC associated to a program]
Let \emph{\texttt{trc\_C}} be a tail recursive \emph{Quip\_E} program defined by a body \emph{\texttt{Body\_C}} and an exit condition \emph{\texttt{exitOn bool}}.
The QMC associated to \emph{\texttt{trc\_C}}, denoted by $QC(\emph{\texttt{trc\_C}})$, is the QMC obtained from $QC^-(\emph{\texttt{Body\_C}})$ by adding the edges defined by the following rules:
\begin{center}
\begin{tabular}{ccc}
\inference{L(\emph{\texttt{bool}})=1}{(\_\!\_\!\_,L)\xrightarrow{\mathcal I}(\_\!\_\!\_,L)} & \qquad \qquad &

\inference{L(\emph{\texttt{bool}})=0}{(\_\!\_\!\_,L)\xrightarrow{\mathcal I}(\emph{\texttt{Body\_C}},O)}
\end{tabular}
\end{center}
\end{definition}
The following theorem states that our definition is correct, i.e., that the structure we associate to a program is a QMC.
\begin{theorem}
Let \emph{\texttt{trc\_C}} be a tail recursive \emph{Quip\_E} program. 
$QC(\emph{\texttt{trc\_C}})$ is a QMC.
\end{theorem}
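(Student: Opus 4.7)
The plan is to reduce the theorem to Lemma~\ref{pqc}. Since $QC(\texttt{trc\_C})$ is obtained from $QC(\texttt{Body\_C})$ by deleting the $\mathcal{I}$ self-loops on terminal states $(\_\!\_\!\_,L)$ and inserting the two new rules governed by the exit condition, the state set $S$, the labelling $Lab$ and the atomic propositions $AP$ all coincide with those of $QC(\texttt{Body\_C})$, as do the outgoing edges of every non-terminal state. Hence only the QMC trace-preservation condition at terminal states needs to be re-verified.

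First I would confirm that the modification keeps $S$ finite: the targets of the added edges are either the terminal state itself or $(\texttt{Body\_C},O)$, which already appears in $QC(\texttt{Body\_C})$, so no new states are introduced and $S$ stays within the finite state space established in Lemma~\ref{pqc}. For every non-terminal state $s$, cases (1)--(6) of the proof of that lemma apply verbatim, yielding that $\sum_{t} Q(s,t)$ is trace-preserving.

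The remaining case is a terminal state $s=(\_\!\_\!\_,L)$. Since $L$ is a total function from $\mathcal{B}$ to $\{0,1\}$, the guard $L(\texttt{bool})$ takes exactly one of the two values $0$ or $1$; the two new rules are therefore mutually exclusive and jointly exhaustive, so $s$ has exactly one outgoing edge, always labelled by the identity superoperator $\mathcal{I}$. Because $\mathcal{I}$ is trivially trace-preserving, the QMC condition also holds at $s$. Combining the two cases with the inherited tuple components gives all four clauses of Definition~\ref{qmcdef}. I expect no real obstacle: the only mild subtlety is checking that branching back to $(\texttt{Body\_C},O)$ does not call for any new state nor modify the edge structure inherited from $QC(\texttt{Body\_C})$, which is immediate because $(\texttt{Body\_C},O)$ is precisely the root of that chain.
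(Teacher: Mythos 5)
Your proposal is correct and follows essentially the same route as the paper: reuse cases (1)--(6) of Lemma~\ref{pqc} for non-terminal states, and observe that at a terminal state the two exit-condition rules are mutually exclusive and exhaustive, so exactly one $\mathcal{I}$-labelled edge leaves it, and $\mathcal{I}$ is trace-preserving. The extra checks you add (finiteness of the state set and that the new edges target only pre-existing states) are sound refinements of the same argument rather than a different approach.
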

\begin{proof}
In order to prove that $QC(s)$ is a QMC we have to verify that the sum of the superoperators labelling the edges outgoing from each state is a trace preserving superoperator. The first part follows from cases \emph{(1)--(6)} of the proof of Lemma \ref{pqc}, while case \emph{(7)} is replaced by two possibilities. Since either \texttt{L(bool)=1} or \texttt{L(bool)=0} are satisfied and they do not hold at the same time, there is always one outgoing edge with label $\mathcal{I}$, which is trace preserving.

\end{proof}

\subsection{Translation of \texttt{trc\_C} Programs}\label{sec:translation}
In the following we will show the translation of (the most significative) instructions of a \texttt{trc\_C} program into QMCs, according to their underlying operational semantics. For each of them we will provide a graphical representation of the resulting QMC as a directed graph, in which the nodes are the states of the chain, and the edges are labeled by the unitary or measurement superoperators, according to the order of the Quip-E instructions. The states are labelled according to the structural operational semantics defined.

The first four examples show the QMCs for single instructions, such as reset, unitary transformations, measurements and if--then--else respectively. 

For space reason, we will group ``quantum"-like instructions under the name \texttt{body}, and denote measurements, dynamic lift and conditional branches by using the line number associated; e.g., in Example \ref{ex8},  $\texttt{m <- measure q}=\texttt{1}$, $\texttt{b <- dynamic\_lift m}=\texttt{2}$, $\texttt{if bool}=\texttt{3}$, $\texttt{hadamard\_at q}=\texttt{4}$ and $\texttt{X\_at q}=\texttt{5}$.

\begin{example}
In this example we show a single--qubit \texttt{trc\_C} program in which a reset gate is applied. The Quip-E reset instruction together with its corresponding QMC, can be represented as follows:
\begin{lstlisting}[language=Haskell,numbers=left, numbersep=5pt, 
  numberstyle=\tiny,xleftmargin=2em,frame=single,framexleftmargin=1.5em]]
reset_at q
\end{lstlisting}
with $\mathcal{B}=\emptyset$.

\begin{center}
\scalebox{0.8}{
\begin{tikzpicture}[->,>=stealth',shorten >=1pt,auto,node distance=3cm,
  thick,main node/.style={ellipse,draw,minimum size=1.3cm,font=\sffamily\Large\bfseries}]
  \node[main node] (2)  {(\texttt{reset\_at q}, O)};
  \node[main node] (6) [right of=2, draw=white] {};
  \node[main node] (3) [right of=6] {(\_\!\_\!\_,O)};
  \node[main node] (4) [right of=3,draw=white] {};
  \node[main node] (5) [below =  1cm of 6,draw] {(\texttt{X\_at q}, O)};
  \path[every node/.style={font=\sffamily\small}]
    
    (2) edge [bend left=20] node [above] {$\mathcal{M}_0$} (3)
    (2) edge [bend right=20] node [below left] {$\mathcal{M}_1$} (5)
    (5) edge [bend right=20] node [below right] {$\mathcal{X}$} (3)
    (3) edge [loop right] node {$\mathcal{I}$} ();
\end{tikzpicture}}
\end{center}

\end{example}

\begin{example}
In this example we show a two--qubit program, in which an Hadamard gate is applied on the second qubit, then a measurement instruction is performed on the first one. The Quip-E instructions, together with the corresponding QMC, can be represented as follows:
\begin{lstlisting}[numbers=left, numbersep=5pt, 
  numberstyle=\tiny,xleftmargin=2em,frame=single,framexleftmargin=1.5em]
hadamard_at q2
m <- measure q1
\end{lstlisting}
with $\mathcal{B}=\{m\}$ and \texttt{O'=O[O(m)=1]}.

\begin{center}
\scalebox{0.6}{
\begin{tikzpicture}[->,>=stealth',shorten >=1pt,auto,node distance=3cm,
  thick,main node/.style={ellipse,minimum size=1.4cm,draw,font=\sffamily\Large\bfseries}]
  \node[main node] (0){(\texttt{body}, O)};
  \node[main node] (2) [right=2cm of 0] {(\texttt{m<- measure q1}, O)};
  \node[main node] (6) [right of=2, draw=white] {};
  \node[main node] (1) [above =1 cm of 6] {(\_\!\_\!\_,O)};
  \node[main node] (5) [below =1 cm of 6] {(\_\!\_\!\_,O')}; 

  \path[every node/.style={font=\sffamily\small}]
    (0) edge node {$\mathcal{I}\otimes\mathcal{H}$} (2)
    (2) edge [bend left=20] node [above left] {$\mathcal{M}_0 \otimes \mathcal{I}$} (1)
    (1) edge [loop right] node {$\mathcal{I}$} (1)
    (2) edge [bend right=20] node [below left] {$\mathcal{M}_1 \otimes \mathcal{I}$} (5)
    (5) edge [loop right] node {$\mathcal{I}$} (5);
\end{tikzpicture}}
\end{center}

\end{example}

\begin{example}
In this example we show a two--qubit program, in which two measurement are applied on the first and second qubit, respectively. The Quip-E instructions, together with the corresponding QMC, can be represented as follows:
\begin{lstlisting}[language=Haskell,numbers=left, numbersep=5pt, 
  numberstyle=\tiny,xleftmargin=2em,frame=single,framexleftmargin=1.5em]
m1 <- measure q1
m2 <- measure q2
\end{lstlisting}
with $\mathcal{B}=\{m1,m2\}$. 

\begin{center}
\scalebox{0.7}{
\begin{tikzpicture}[->,>=stealth',shorten >=1pt,auto,node distance=3cm,
  thick,main node/.style={ellipse,minimum size=1.4cm,draw,font=\sffamily\Large\bfseries}]

  \node[main node] (2) [right of=0] {(\texttt{1},O)};
  \node[main node] (6) [right of=2, draw=white] {};
  \node[main node] (1) [above =1 cm of 6] {(\texttt{2},O)};
  \node[main node] (3) [right of=6,draw=white] {};
  \node[main node] (4) [above of=3, right=1.5 cm, draw=white] {};
  \node[main node] (5) [below =1 cm of 6] {(\texttt{2},O')}; 
  \node[main node] (8) [below of=3, right=1.5 cm,draw=white] {};
  \node[main node] (ff) [above = -0.4 cm of 4] {(\_\!\_\!\_,O)};
  \node[main node] (ft) [below = 0.7 cm of 4] {(\_\!\_\!\_,O')};
  \node[main node] (tf) [above = 0.7 cm of 8] {(\_\!\_\!\_,O')};
  \node[main node] (tt) [below = -0.4 cm of 8] {(\_\!\_\!\_,O'')};

  \path[every node/.style={font=\sffamily\small}]
    (2) edge [bend left=20] node [above left] {$\mathcal{M}_0 \otimes \mathcal{I}$} (1)
    (2) edge [bend right=20] node [below left] {$\mathcal{M}_1\otimes \mathcal{I}$} (5)
    (1) edge node [above left] {$\mathcal{I}\otimes \mathcal{M}_0$} (ff)
    (ff) edge [loop right] node {$\mathcal{I}$} (ff)
    (1) edge node [below left] {$\mathcal{I}\otimes \mathcal{M}_1$} (ft)
    (ft) edge [loop right] node {$\mathcal{I}$} (ft)
    (5) edge node [above left] {$\mathcal{I} \otimes \mathcal{M}_0$} (tf)
    (tf) edge [loop right] node {$\mathcal{I}$} (tf)
    (5) edge node [below left] {$\mathcal{I}\otimes \mathcal{M}_1$} (tt)
    (tt) edge [loop right] node {$\mathcal{I}$} (tt);
\end{tikzpicture}}
\end{center}

\end{example}

\begin{example}\label{ex8}
In this example we show a single--qubit program, in which a measurement is performed and it is followed by a dynamic lifting, which transforms the resulting bit into a Boolean value, and by a conditional branch in which, according to the result an Hadamard or a Pauli X gate are applied. The Quip-E instructions, together with the corresponding QMC, can be represented as follows:
\begin{lstlisting}[language=Haskell,numbers=left, numbersep=5pt, 
  numberstyle=\tiny,xleftmargin=2em,frame=single,framexleftmargin=1.5em]
m <- measure q
b <- dynamic_lift m
if b
	then hadamard_at q
	else X_at q
\end{lstlisting}
with $\mathcal{B}=\{m,b\}$. 

\begin{center}
\scalebox{0.7}{
\begin{tikzpicture}[->,>=stealth',shorten >=1pt,auto,node distance=3.5cm,
  thick,main node/.style={ellipse,minimum size=1.4cm,draw,font=\sffamily\Large\bfseries}]

  \node[main node] (2) {(\texttt{1},O)};
  \node[main node] (6) [right of=2, draw=white] {};
  \node[main node] (1) [above =0.3 cm of 6] {(\texttt{2},O)};
  \node[main node] (3) [right of=6,draw=white] {};
  \node[main node] (5) [below =0.3 cm of 6] {(\texttt{2},O')};
  \node[main node] (7) [right of=1] {(\texttt{3},O)};
  \node[main node] (8) [right of=5] {(\texttt{3},O')};
  \node[main node] (9) [right of=7] {(\texttt{5},O)};
  \node[main node] (10) [right of=9] {(\_\!\_\!\_,O)};
  \node[main node] (11) [right of=8] {(\texttt{4},O')};
    \node[main node] (12) [right of=11] {(\_\!\_\!\_,O')};

  \path[every node/.style={font=\sffamily\small}]
    (2) edge [bend left=20] node [above left] {$\mathcal{M}_0$} (1)
    (2) edge [bend right=20] node [below left] {$\mathcal{M}_1$} (5)
    (1) edge node {$\mathcal{I}$} (7)
    (7) edge node {$\mathcal{I}$} (9)
    (9) edge node {$\mathcal{X}$} (10)
    (10) edge [loop right] node {$\mathcal{I}$} (10)
    (5) edge node {$\mathcal{I}$} (8)
    (8) edge node {$\mathcal{I}$} (11)
    (11) edge node {$\mathcal{H}$} (12)
    (12) edge [loop right] node {$\mathcal{I}$} (12);
\end{tikzpicture}}
\end{center}

\end{example}

In the following we present an example of \texttt{trc\_C} program in order to show the behaviour of the Quip-E tail--recursive instruction $exitOn$.

\begin{example}
In this example we show a two--qubit program, in which a measurement is performed and it is followed by a dynamic lifting, and by a conditional branch in which, according to the result an Hadamard or a Pauli X gate are applied. As last instruction, we have a recursive instruction which allows the program to terminate only when the Boolean value $\texttt{b}=true$. The Quip-E instructions, together with the corresponding QMC, can be represented as follows:
\begin{lstlisting}[language=Haskell,numbers=left, numbersep=5pt, 
  numberstyle=\tiny,xleftmargin=2em,frame=single,framexleftmargin=1.5em]
m1 <- measure q1
b1<- dynamic_lift m1
if b1
	then hadamard_at q2
	else X_at q2
exitOn b1
\end{lstlisting}
with $\mathcal{B}=\{m1,b1\}$. 

\begin{center}
\scalebox{0.7}{
\begin{tikzpicture}[->,>=stealth',shorten >=1pt,auto,node distance=3.5cm,
  thick,main node/.style={ellipse,minimum size=1.4cm,draw,font=\sffamily\Large\bfseries}]

  \node[main node] (2)  {(\texttt{1},O)};
  \node[main node] (6) [right of=2, draw=white] {};
  \node[main node] (1) [above =0.3 cm of 6] {(\texttt{2},O)};
  \node[main node] (3) [right of=6,draw=white] {};
  \node[main node] (5) [below =0.3 cm of 6] {(\texttt{2},O')};
  \node[main node] (7) [right of=1] {(\texttt{3},O)};
  \node[main node] (8) [right of=5] {(\texttt{3},O'')};
  \node[main node] (9) [right of=7] {(\texttt{5},O)};
  \node[main node] (10) [right of=9] {(\_\!\_\!\_,O)};
  \node[main node] (11) [right of=8] {(\texttt{4},O'')};
    \node[main node] (12) [right of=11] {(\_\!\_\!\_,O'')};

  \path[every node/.style={font=\sffamily\small}]
    (2) edge [bend left=20] node [above left] {$\mathcal{M}_0$} (1)
    (2) edge [bend right=20] node [below left] {$\mathcal{M}_1$} (5)
    (1) edge node {$\mathcal{I}$} (7)
    (7) edge node {$\mathcal{I}$} (9)
    (9) edge node {$\mathcal{X}$} (10)
    (10) edge [bend left=10, dashed] node {$\mathcal{I}$} (2)
    (5) edge node {$\mathcal{I}$} (8)
    (8) edge node {$\mathcal{I}$} (11)
    (11) edge node {$\mathcal{H}$} (12)
    (12) edge [loop right] node {$\mathcal{I}$} (12);
\end{tikzpicture}}
\end{center}
\end{example}

\section{Implementation}\label{sec:implementation}
The translator \texttt{Entang$\lambda$e} has been implemented in Haskell.  
In order to provide a more intuitive layout, \texttt{Entang$\lambda$e} has been provided with a web based graphical interface written in Elm,\footnote{\protect\url{http://elm-lang.org}.}. \texttt{Entang$\lambda$e}  is divided into three main blocks: \emph{Quipper}, \emph{Tree} and \emph{QPMC}, and a snapshot of its interface is shown in Fig. \ref{fig:entangle-full}.  

\begin{figure}[h!]
\begin{center}\includegraphics[width=\textwidth]{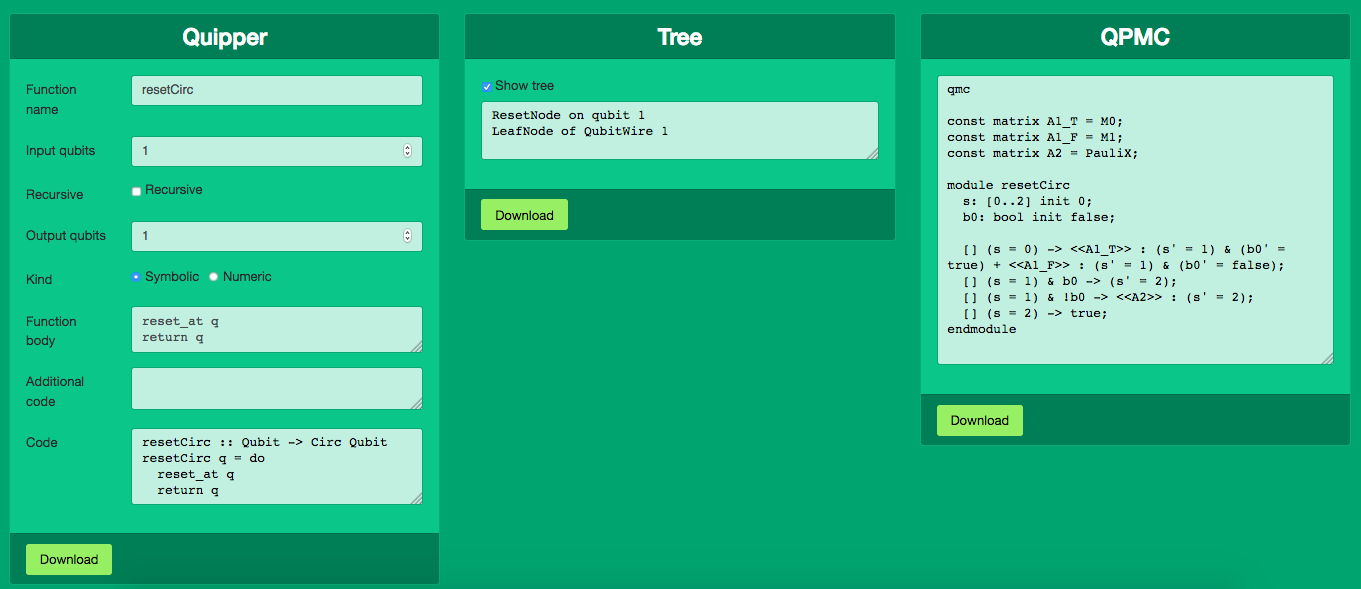}\end{center}
\protect\caption{Entang$\lambda$e graphic interface.}
\label{fig:entangle-full}
\end{figure}

A more in-depth description of  both \texttt{Entang$\lambda$e} and the underlying translation algorithm are provided in \cite{entangle-valuetools, RC2016}. 

\subsection{Experiments}\label{sec:experiments-short}

We tested \texttt{Entang$\lambda$e} with simple implementations of different programs, i.e., Deutsch--Jozsa, a Grover--based \emph{quantum switch} function, teleportation.
Grover's search and the BB84 quantum key distribution protocol were previously analysed in \cite{entangle-valuetools}. 

Where possible, for the aforementioned algorithms we provided both the recursive and non--recursive version. In the Appendix, for each algorithm we briefly recall its behaviour, we provide the Quip-E implementation and, for some of them, we provide their representation in terms of QMCs. The proper translation into QPMC code, due to space reasons, can be found in the Appendix.

\section{Conclusion}\label{sec:conclusion}
In this work we provided an operational semantics for Quip-E quantum programs, in order to formalise the underlying structure of \texttt{Entang$\lambda$e}, a framework for translating quantum programs into QPMC models, i.e., QMCs. The main idea is to create a tool for both writing quantum algorithms and protocols using a high-level programming language, and formally verifying them.
We put particular attention in the translation at a semantic level. While Quipper, and thus Quip-E, uses the state vector formalism and the quantum circuit model of computation, QPMC uses the density matrix formalism and QMCs, allowing to uniformly deal with both evolution and measurement operations.
We extended the tool in \cite{RC2016} in order to deal also with tail-recursive Quip-E programs, i.e., programs in which measurement results may lead to the termination or the re-execution of a particular circuit.

We aim at optimising our framework in order to validate complex algorithms and protocols, e.g., 
the ones using higher number of qubits, or involving a wider number of multi-qubit gates, which are actually difficult to translate due to the huge cost to generate the matrices. 
In the future, we also intend to  investigate  the specification of properties involving typical quantum effects, in particular automatic entanglement detection or multipartite entanglement representation, and to enhance the formal verification in the direction of symbolic model checking. 

Finally, even if the presented operational semantics is tailored on QPMC, in general QMCs can be used as underlying model for other model checkers, hence our work can be easily adapted and used outside QPMC.

\bibliographystyle{plain}

\section*{Acknowledgements}
We acknowledge useful discussion with Prof. Carla Piazza, and Prof. Paolo Zuliani, since both of them supported us with valuable insights.  A large part of this work was carried out when Dr. Linda Anticoli was affiliated with the University of Udine.
\section*{Conflict of interest}
The authors declare no conflict of interests.

\newpage

\begin{appendices}

\section{Experiments}

\subsection{Deutsch--Jozsa}

Let's consider a function $f$ from $n$ bits to $1$ bit, $f: \{0,1\}^n\longrightarrow \{0,1\}$.
Deutsch--Jozsa algorithm allows to distinguish between two different classes of functions, i.e., the constant and balanced ones. The function is constant if it evaluates the same on all inputs, i.e., the function is either $f(x) = 0$ or $f(x) = 1$ for every $x$, while it is balanced if the function is $0$ on one half of the possible inputs and $1$ on the other half. By using a quantum device, one single oracle query is needed to deterministically know whether the function is constant (the circuit output is a register $|0\rangle^{\otimes n}$) or not (the circuit output contains at least one state set to $|1\rangle$). Classically, the problem is solved by using $\frac{1}{2^n}+1$ queries, in the worst case since we have to apply the function on half the inputs instead of using a linear superposition of them.

 In the following we show an implementation of the algorithm using $3$--qubits oracles, plus an ancilla to implement it in a reversible way; the first oracle is constant and returns $0$ on all the inputs, while the second one is balanced.
 
 \paragraph{Implementation and Translation}
 In the following we present the Quip-E implementation for an instance of constant oracle, on the left, and a balanced one, on the right. 
 
 \begin{minipage}[t]{.45\textwidth}
\lstinputlisting[title=Constant,frame=tlrb,language=Haskell, breaklines=true, firstline=1, basicstyle=\scriptsize\ttfamily\linespread{0.5}, lastline=10]{dJozsaConst.hs}
\end{minipage}\hfill \ \ \ \
\begin{minipage}[t]{.45\textwidth}
\lstinputlisting[title=Balanced,frame=tlrb,language=Haskell, breaklines=true, firstline=1, basicstyle=\scriptsize\ttfamily\linespread{0.5}, lastline=12]{dJozsaBal.hs}
\end{minipage}

 \paragraph{Test:}
 Some examples of QCTL formulae that we tested are presented in the following. In the case in which the output is a matrix, instead of displaying it, we put its representation in the Appendix, for space reasons. In order to show the probability associated, we provide the traces of the matrices.
 
 \begin{table}[h!]
  \begin{center}
    \label{tab:table1}
    \begin{tabular}{l|c|c}
      \textbf{QCTL Fomula} & \textbf{Output} & \textbf{Trace} \\ 
        &  & \\ 
      \hline
           \texttt{qeval(Q=? [F (s = 19 \& !b0 \& !b1 \& !b2)], r);} & (\ref{appendix}) & $1$ \\ 
           &  & \\ 
      \texttt{qeval(Q=? [F (s = 19 \& b0 \& !b1 \& !b2)], r);} & (\ref{appendix}) & $0$\\ 
      &  & \\ 
      \texttt{Q=1[F(s=19 \& !b0 \& !b1 \& !b2)];} & \texttt{true} &\\
      &  & \\ 
      \texttt{Q=1[F(s=19 \& b0 \& !b1 \& !b2)];} & \texttt{false} &\\
    \end{tabular}
        \caption{Deutsch--Jozsa Constant Verification.}
        \label{table: djc}
  \end{center}
\end{table}

The first formula computes the probability that, given an initial state \texttt{r}$=|0001\rangle\langle0001|$ we reach the final state $|0\rangle$. Such probability is equal to $1$, while the probability of reaching a final state in which at least a state $|1\rangle$ occurs is equal to $0$, as expected. In the following we consider the same queries in the case of the balanced oracle; the results change accordingly, since at least one state $|1\rangle$ should occur for the algorithm to success.

The last formula, in both Table \ref{table: djc} and \ref{table: djb}, investigates whether the probability of reaching the attended final state is equal to $1$, which is true since the two instances are deterministic.

 \begin{table}[h!]
  \begin{center}
    \label{tab:table1}
    \begin{tabular}{l|c|c}
      \textbf{QCTL Fomula} & \textbf{Output} & \textbf{Trace} \\ 
        &  & \\ 
      \hline
      \texttt{qeval(Q=? [F (s = 22 \& !b0 \& !b1 \& !b2)], r);} & (\ref{appendix}) & $0$ \\ 
      &  & \\ 
      \texttt{qeval(Q=? [F (s = 22 \& b0 \& b1 \& !b2)], r);} & (\ref{appendix}) & $1$\\ 
      &  & \\ 
      \texttt{Q>0.5[F(s=21 \& b0 \& b1 \& !b2)];} & \texttt{true} &\\
      &  & \\ 
      \texttt{Q<0.5[F(s=21 \& !b0 \& !b1 \& !b2)];} & \texttt{true} &\\
    \end{tabular}
        \caption{Deutsch--Jozsa Balanced Verification.}
        \label{table: djb}
  \end{center}
\end{table}

\subsection{Quantum Switch}

Classically, a Boolean switch function (or switch statement) checks for equality a discrete variable (or a Boolean expression) against a list of values, called cases. The variable to be switched  is checked for each case. Just as the classical version, a quantum switch returns, according to the value of the input qubits, the index of the correct gate (function) to be applied on them. In this way, we are sure that a given set of functions works properly on each possible combination of variables in input.

The idea behind the quantum switch is to use Grover's algorithm on a superposition of Boolean functions, represented by quantum \emph{oracles}, rather than on a superposition of basis states. A linear superposition of oracles is an operator which has the following matrix representation:
\begin{align}\hat{O}=\begin{pmatrix} U_0 & &  & \\
& U_1 & & \\
& & \ddots & \\
& & & U_n
\end{pmatrix} \equiv U_0 \oplus \dots \oplus U_n
\end{align}

While the aim of Grover's algorithm is searching for the index $i$ (represented by an $n$-bit string) of an element in an unstructured $N$-dimensional space, the aim of our Quantum Switch is to search for the index of the $i$-th Boolean function according to the index of the input qubits. 
In particular, the quantum switch relies on the diffusion operator of Grover's algorithm in order to amplify the probability that the right answer occurs. In order to do so, we have to extend the search space in order to provide in input both the variables to be switched and the linear superposition of oracles. Thus the search space will have size $N=2^{2n}+1$, with $2^n$ control qubits, i.e. the variables,  $2^n$ qubits on which Grover's diffusion operator is applied plus one ancillary qubit. Let us consider a non-trivial circuit with $N=16$, in Figure \ref{fig:qswitchns}
\begin{figure}[h]
\begin{center}\includegraphics[width=\textwidth]{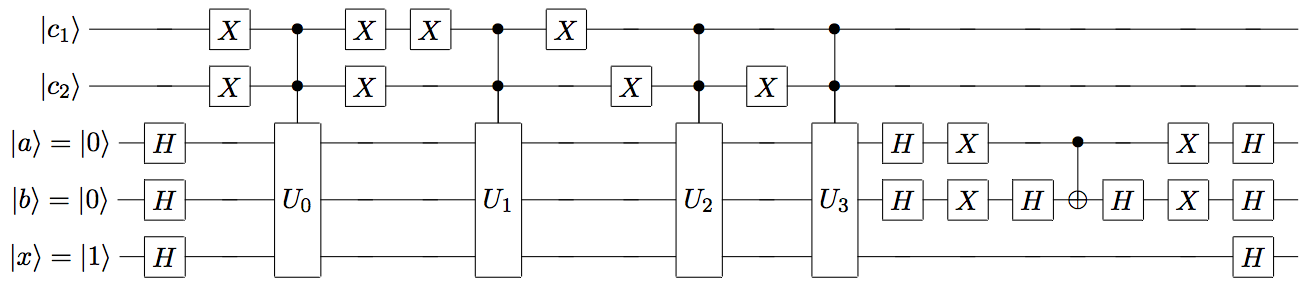}\end{center}
\protect\caption{Quantum Switch without superposition of variables.}
\label{fig:qswitchns}
\end{figure}
where, according to the values of $|c_1\rangle$ and $|c_2\rangle$, the circuit applies the correspondent oracle function and return in output the state $|c_1,c_2\rangle U_i |a,b\rangle$. In this particular case, the algorithm is deterministic and requires only one iteration of the diffusion operator. This is due to the fact that the Grover's diffusion operator is applied to a subspace of size $N_{sub}=4$.

A more interesting example can be found in Figure \ref{fig:qswitch}
\begin{figure}[h]
\begin{center}\includegraphics[width=\textwidth]{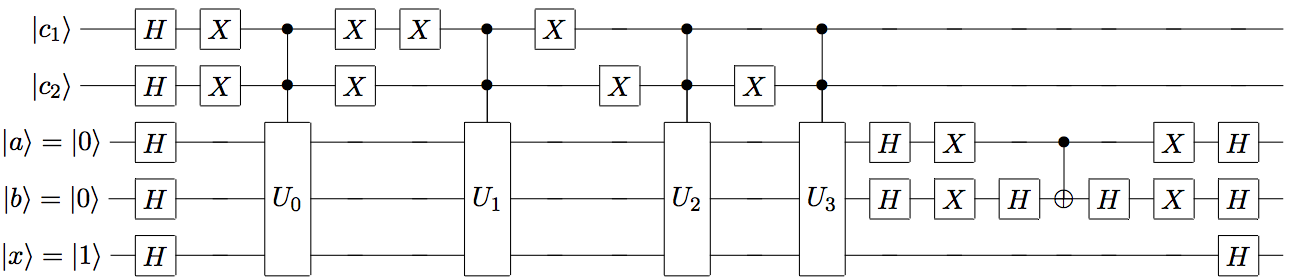}\end{center}
\protect\caption{Quantum Switch with Superposition of Variables.}
\label{fig:qswitch}
\end{figure}
\noindent where as input we provide a linear superposition of variables, thus we are considering all the possible inputs at the same time, and we want to check whether the output matches our expectations or not. The quantum switch exploits quantum parallelism to give to the superposition of quantum oracles  all the possible input strings at the same time. In the end we obtain in output a quantum state of the form $\frac{1}{2} \sum_{i,j} |c_i,c_j\rangle U_{i,j} |a,b\rangle$ which is an uniform distribution of oracles. In general, our quantum switch takes a linear superposition $\frac{1}{\sqrt{N}}|c_1,\dots,c_n, a_1,\dots, a_n\rangle$, applies a linear superposition of oracles  $U_0\oplus \dots \oplus U_{2^n}$ and returns a uniform distribution of states, in the form $\frac{1}{\sqrt{N}}\sum_{i_1,\dots,i_n}|c_{i_1},\dots,c_{i_n}\rangle U_{i_1,\dots,i_n}| a_{i_1},\dots, a_{i_n}\rangle$.

\subsubsection{Translation and Validation}
We translated an instance of the quantum switch algorithm, i.e., the deterministic one, with two variables and a search space of $N=32$, due to the ancillary qubit. Instances with more variables are still to be verified due to the high complexity of generating and performing verification of larger operators. 

\begin{minipage}[t]{.45\textwidth}
\lstinputlisting[title=Switch (1),frame=tlrb,language=Haskell, breaklines=true, firstline=1, basicstyle=\scriptsize\ttfamily\linespread{0.5}, lastline=11]{qswitchCirc.hs}
\end{minipage}\hfill \ \ \ \
\begin{minipage}[t]{.40\textwidth}
\lstinputlisting[title=Switch (2),frame=tlrb,language=Haskell, breaklines=true, firstline=12, basicstyle=\scriptsize\ttfamily\linespread{0.5}, lastline=21]{qswitchCirc.hs}
\end{minipage}

\paragraph{Test:}

We tested some QCTL formulae:

\begin{table}[h!]\label{table: switch}
  \begin{center}
    \begin{tabular}{l|c}
      \textbf{QCTL Formula} & \textbf{Output}  \\ 
        &  \\ 
      \hline
      \texttt{Q>=0.25[F(s=39 \& !b0 \& !b1 \& !b2 \& !b3)];} & \texttt{true} \\
       \texttt{Q>=0.25[F(s=39 \& !b0 \& !b1 \& !b2 \& b3)];} & \texttt{false} \\ 
       \texttt{Q>=0.25[F(s=39 \& b0 \& b1 \& b2 \& b3)];} & \texttt{true} \\
       \texttt{Q>=0.25[F(s=39 \& b0 \& b1 \& b2 \& !b3)];} & \texttt{false} \\ 
    \end{tabular}
        \caption{Quantum Switch Verification.}
  \end{center}
\end{table}
The formulae verify that, in the future, a desired state (we restricted the example to two solutions for space reasons) is reached with probability bounded by $0.25$, while the probability to reach other, undesired states, is less than $0.25$, validating our expectations.

A larger version of the circuit, with a search space of $N=512$ can be found at \url{https://github.com/miniBill/entangle/tree/master/res/Entangle\_Tests}.

\subsection{Teleportation}
The aim of quantum teleportation is to move a qubit from one location to another, without physically transporting or copying it, with the aid of a classical channel and a shared quantum entanglement pair between the sender and the receiver. The teleportation protocol between two parties, namely Alice and Bob, can be summarised as follows:
first, an entangled pair is generated between Alice and Bob, then  Alice performs a Bell measurement (a specific sequence of unitary operators followed by a measurement) of her part of the entangled pair qubit and the qubit to be teleported. The measurement yields one of four measurement outcomes, which are then encoded using two classical bits. By using the classical communication channel, Alice sends the two bits to Bob. As a last step, according to the two received bits, Bob applies a pre--determined sequence of unitary gates on its part of the entangled pair, obtaining always the qubit that was chosen for teleportation. 
In the following, since the protocol always succeeds in absence of noise, we provide both the non--recursive and the recursive versions.

\lstinputlisting[title=Teleportation,frame=tlrb,language=Haskell, mathescape=false,breaklines=true, firstline=88, basicstyle=\scriptsize\ttfamily\linespread{0.5}, lastline=107]{teleport.txt}

The first Hadamard gate, followed by a controlled--not is used to create a maximally entangled state between the second and the third qubit.  
A representation of the QMC for the teleportation version can be seen in Figure \ref{fig:tp}.

\begin{center}
\begin{figure}[h]
\resizebox{\textwidth}{!}{
\begin{tikzpicture}[->,>=stealth',shorten >=1pt,auto,node distance=1.3cm,
  thick,main node/.style={ellipse,draw,minimum size=0.5cm,font=\sffamily\footnotesize\bfseries}]
\tikzset{blue dotted/.style={draw=blue!50!white, line width=1pt,
                               dash pattern=on 1pt off 4pt on 6pt off 4pt,
                                inner sep=2mm, rectangle, rounded corners}};
\tikzset{gray dotted/.style={draw=gray!50!white, line width=1pt,
                               dash pattern=on 1pt off 4pt on 6pt off 4pt,
                                inner sep=2mm, rectangle, rounded corners}};
\tikzset{gray dashed/.style={draw=gray!50!white, line width=1pt,
                               dash pattern=on 1pt off 1pt on 1pt off 1pt,
                                inner sep=2mm, rectangle, rounded corners}};
\tikzset{multi/.style={ellipse,draw,minimum size=0.2cm,font=\sffamily\tiny\bfseries}} 

  \node[main node] (0) [draw=white]{};
  \node[main node] (s0) [right of=0] {$s_{0}$};
  \node[main node] (s1t) [above right =0.2cm and 0.5cm of s0,draw] {$s_{1}t$};
  \node[main node] (s2) [below right =0.2cm and 0.5cm of s1t] {$s_2$};
  \node[main node] (s1f) [below right =0.2cm and 0.5cm of s0,draw] {$s_{1}f$};
  \node[main node] (s3t) [above right =0.2cm and 0.5cm of s2,draw] {$s_{3}t$};
  \node[main node] (s4) [below right =0.2cm and 0.5cm of s3t] {$s_4$};
  \node[main node] (s3f) [below right =0.2cm and 0.5cm of s2,draw] {$s_{3}f$};
  \node[main node] (s5) [right=0.5 of s4] {$s_5$};
  \node[main node] (s6) [right=0.5 of s5] {$s_6$};
  \node[main node] (s7) [right=0.5 of s6] {$s_7$};
  \node[main node] (s8) [right=0.5 of s7] {$s_8$};
  \node[main node] (s9t) [above right = of s8] {$s_{9}t$};
  \node[main node] (s9f) [below right = of s8] {$s_{9}f$};
  \node[main node] (s10ff) [below right =0.3cm and 0.5cm of s9f] {$s_{10}ff$};
  \node[main node] (s10ft) [above right =0.3cm and 0.5cm of s9f] {$s_{10}ft$}; 
  \node[main node] (s10tf) [below right =0.3cm and 0.5cm of s9t] {$s_{10}tf$};
  \node[main node] (s10tt) [above right =0.3cm and 0.5cm of s9t] {$s_{10}tt$}; 
  \node[main node] (s11ff) [right =0.5 cm of s10ff] {$s_{11}ff$};
  \node[main node] (s11ft) [right =0.5 cm of s10ft] {$s_{11}ft$}; 
  \node[main node] (s11tf) [right =0.5 cm of s10tf] {$s_{11}tf$};
  \node[main node] (s11tt) [right =0.5 cm of s10tt] {$s_{11}tt$}; 
   \node[main node] (s12fff) [below right =0.5cm and 0.6cm of s11ff] {$s_{12}fff$};
  \node[main node] (s12fft) [right =0.5 cm of s11ff] {$s_{12}fft$}; 
  \node[main node] (s12ftf) [below right =0.3cm and 0.5cm of s11ft] {$s_{12}ftf$};
  \node[main node] (s12ftt) [right =0.5 cm of s11ft] {$s_{12}ftt$}; 
   \node[main node] (s12tff) [right =0.5 cm of s11tf] {$s_{12}tff$};
  \node[main node] (s12tft) [above right =0.2cm and 0.5cm of s11tf] {$s_{12}tft$}; 
  \node[main node] (s12ttf) [right =0.5 cm of s11tt] {$s_{12}ttf$};
  \node[main node] (s12ttt) [above right =0.5cm and 0.6cm of s11tt] {$s_{12}ttt$}; 
  
  
  \path[every node/.style={font=\sffamily\tiny}]
    (s1t) edge [bend left=20] node [above] {$\mathbb{I}$} (s2)
    (s0) edge [bend left=20] node [above] {$M_0 $} (s1t)
    (s0) edge [bend right=20] node [below] {$M_1$} (s1f)
    (s1f) edge [bend right=20] node [below] {$X$} (s2)
    (s2)edge [bend left=20] node [above] {$M_0$} (s3t)
    (s2) edge [bend right=20] node [below] {$M_1$} (s3f)
    (s3t) edge [bend left=20] node [above] {$\mathbb{I}$} (s4) 
    (s3f) edge [bend right=20] node [below] {$X$} (s4)
    (s4) edge node [above] {$H$} (s5)
    (s5) edge node [above] {$CN$} (s6) 
    (s6) edge node [above] {$CN$} (s7)
    (s7) edge node [above] {$H$} (s8)
   (s8) edge [bend right=20] node [above ] {$M_0$} (s9f)
    (s8) edge [bend left=20] node [below ] {$M_1$} (s9t)
    (s9f) edge [bend right=20] node [above ] {$M_0$} (s10ff)
    (s9f) edge [bend left=20] node [below ] {$M_1$} (s10ft)
    (s9t) edge [bend right=20] node [above ] {$M_0$} (s10tf)
    (s9t) edge [bend left=20] node [below ] {$M_1$} (s10tt)
    (s10ff) edge node [above ] {$I$} (s11ff)
    (s10ft) edge node [above ] {$X$} (s11ft)
    (s10tf) edge node [below ] {$Z$} (s11tf)
    (s10tt) edge node [below ] {$Y$} (s11tt)
    (s11ff)     edge [bend right=20] node [below ] {$M_0$} (s12fff)
    (s11ff)     edge [bend left=20] node [below ] {$M_1$} (s12fft)
    (s11ft)     edge [bend right=20] node [below ] {$M_0$} (s12ftf)
    (s11ft)     edge [bend left=20] node [below ] {$M_1$} (s12ftt)
    (s11tf)     edge [bend right=20] node [below ] {$M_0$} (s12tff)
    (s11tf)     edge [bend left=20] node [above ] {$M_1$} (s12tft)
       (s11tt)     edge [bend right=20] node [below ] {$M_0$} (s12ttf)
    (s11tt)     edge [bend left=20] node [above ] {$M_1$} (s12ttt)
        (s12ttt)      edge [loop right]  node [right]  {$I$}     ()
        (s12ttf)      edge [loop right]  node [right]  {$I$}     ()
        (s12tft)      edge [loop right]  node [right]  {$I$}     ()
        (s12tff)      edge [loop right]  node [right]  {$I$}     ()
        (s12ftt)      edge [loop right]  node [right]  {$I$}     ()
        (s12fft)      edge [loop right]  node [right]  {$I$}     ()
        (s12ftf)      edge [loop right]  node [right]  {$I$}     ()
        (s12fff)      edge [loop right]  node [right]  {$I$}     ();

\end{tikzpicture}}
\protect\caption{QMC for the Teleportation Protocol.}
\label{fig:tp}
\end{figure}
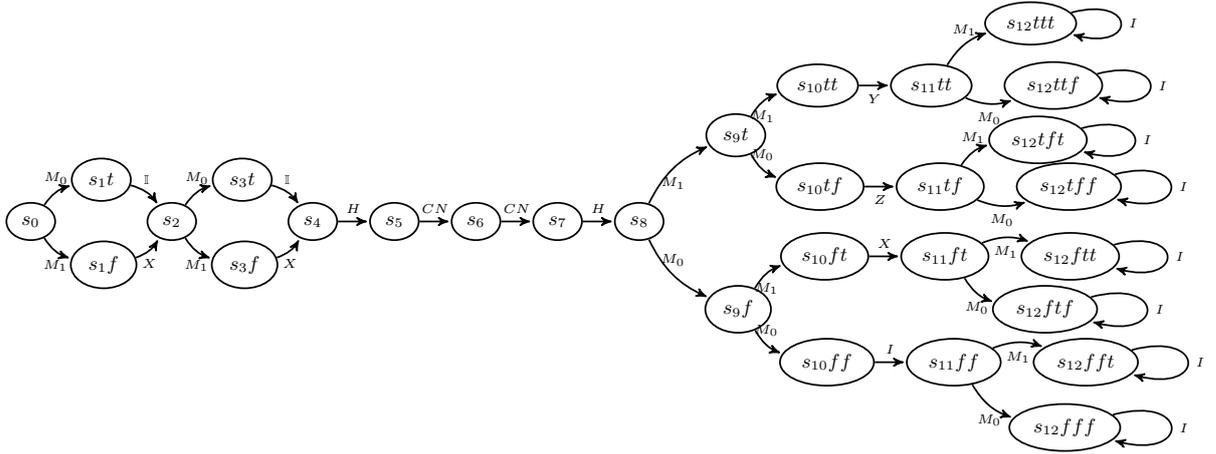
\end{center}

In the following we provided a tail--recursive version of the same protocol.
\vskip 1cm
\lstinputlisting[title=Teleportation Recursive,frame=tlrb,language=Haskell, mathescape=false, breaklines=true, firstline=60, basicstyle=\scriptsize\ttfamily\linespread{0.5}, lastline=86]{teleport.txt}

\begin{table}[h!]\label{table: tel}
  \begin{center}
    \begin{tabular}{c|c}
      \textbf{QCTL Formula} & \textbf{Output}  \\ 
        &  \\ 
      \hline
      \texttt{Q>=0.25[F(s=11 \& !b0 \& !b1)];} & \texttt{true} \\
      \texttt{Q=0[F(s=12 \& !b0 \& !b1 \& !b2)];} & \texttt{false} \\ 
    \end{tabular}
        \caption{Teleportation Protocol  Verification.}
  \end{center}
\end{table}
The  formulae bounds the probability to reach a desired state (after the first conditional branch) to a value greater or equal  to $0.25$, and ascertain that the probability to reach a final, desired state (after the second conditional branch), never goes to $0$.

\newpage
\section{Results}\label{appendix}
In the following we show the examples of translated quantum algorithms, together with some results that have been omitted for space reason from the previous sections.

Exploiting our implementation of \texttt{Entang$\lambda$e} the following code has been automatically generated. 

\subsection{Deutsch--Jozsa Constant Oracle QMC}

\lstinputlisting[language=Haskell, breaklines=true, basicstyle=\tiny\ttfamily\linespread{0.5}, frame=single, framexleftmargin=1.5em]{case_studies/dJozsaConst.qpmc}

\subsection{Deutsch--Jozsa Balanced Oracle QMC}

\lstinputlisting[language=Haskell, breaklines=true, basicstyle=\tiny\ttfamily\linespread{0.5}, frame=single, framexleftmargin=1.5em]{case_studies/dJozsaBal.qpmc}

\subsubsection{Deutsch--Jozsa Density Matrices}

\noindent{\scriptsize\texttt{r=[0;1;0;0;0;0;0;0;0;0;0;0;0;0;0;0;0;0]}}

{\bf Constant}

\noindent{\scriptsize\texttt{qeval(Q=? [F (s = 19 \& !b0 \& !b1 \& !b2)], r)}=}\lstinputlisting[breaklines=true, basicstyle=\tiny\ttfamily\linespread{0.5}, firstline=43, lastline=59]{case_studies/dJozsaTestBalConst.txt}

\noindent{\scriptsize\texttt{qeval(Q=? [F (s = 19 \& b0 \& b1 \& !b2)], r)}=}\lstinputlisting[breaklines=true, basicstyle=\tiny\ttfamily\linespread{0.5}, firstline=61, lastline=77]{case_studies/dJozsaTestBalConst.txt}

\noindent{\bf Balanced}

\noindent{\scriptsize\texttt{qeval(Q=? [F (s = 22 \& !b0 \& !b1 \& !b2)], r)}=}\lstinputlisting[breaklines=true, basicstyle=\tiny\ttfamily\linespread{0.5}, firstline=5, lastline=21]{case_studies/dJozsaTestBalConst.txt}

\noindent{\scriptsize\texttt{qeval(Q=? [F (s = 22 \& b0 \& b1 \& !b2)], r)}=}\lstinputlisting[breaklines=true, basicstyle=\tiny\ttfamily\linespread{0.5}, firstline=23, lastline=39]{case_studies/dJozsaTestBalConst.txt}

\subsection{Quantum Switch}

\lstinputlisting[language=Haskell, breaklines=true, basicstyle=\tiny\ttfamily\linespread{0.5}, frame=single, framexleftmargin=1.5em, firstline=81, lastline=154]{switchUnitary.qpmc}

\end{appendices}
\end{document}